\newtheorem{lemma}{Lemma}
\newtheorem{theorem}{Theorem}
\newtheorem{proof}{Proof}
\begin{document}
\date{}
\title{Generalizing the Frailty Assumptions in Survival Analysis} 
\author{Vahed Maroufy  \hspace{1cm}{\it and}\hspace{1cm}  Paul Marriott}
\affil{\small Department of Statistics and Actuarial Science, University of Waterloo}
\maketitle

\begin{abstract}
This paper studies Cox's regression hazard model with an unobservable random frailty where no
specific distribution is postulated for the frailty variable, and the marginal lifetime 
distribution allows both parametric and non-parametric models. Laplace's approximation method
and gradient search on smooth manifolds embedded in Euclidean space are applied, and a non-iterative
profile likelihood optimization method is proposed for estimating the regression coefficients.
The proposed method is compared with the Expected-Maximization method developed based on a gamma
frailty assumption, and also in the case when the  frailty model is misspecified.
\end{abstract}

{\it Keywords}: Cox model, Frailty, Local mixture model, Newton's method, Smooth manifold, Survival analysis.

\section{Introduction}\label{Introduction}
Frailty models are important for analyzing survival time data and have been studied by many researchers;
for example, \cite{Klein1992}, \cite{Hougaard1986a}, \cite{Clayton1978} and \cite{Gorfine2006}. One way
of deriving frailty survival models, which we do not follow here, is to formulate the frailty
factor as a single parameter, $\theta$, presenting the association between time-to-event data of two
correlated events, in which $\theta=1$ is interpreted as being no correlation while $\theta>1$ and
$\theta<1$ demonstrate positive and negative association, respectively (\citealp{Hu2011}; \citealp{Nan2006};
\citealp{Clayton1985};\citealp{Clayton1978}; \citealp{Oakes1982}). 

An alternative approach for modeling heterogeneity as unobserved covariate, is to add the frailty variable
as a multiplicative factor to the baseline hazard function. In \cite{Hougaard1986a} a positive stable family
is assumed for the frailty variable and the marginal survival time is assumed to have an exponential or
Weibull  distribution or be unspecified.  Various hazard functions, including Cox's regression model, have
been generalized by assuming a  gamma frailty variable with mean equal to 1 and variance $\eta$, see
\cite{Nielsen1992} and \cite{Klein1992}. They utilize the Expectation-Maximization algorithm for estimation
of parametric and nonparametric accumulated hazard function and regression coefficients. Further,
\cite{Gorfine2006} proposed a different approach for estimation in non-parametric frailty survival models,
which is applicable for any parametric model with finite mean on the frailty variable. 

Although, different models have been assumed for the multiplicative frailty variable, (\citealp{Gorfine2006};
\citealp{Hougaard1984}; \citealp{Hougaard1986a}), one of the most frequently used distributions is the gamma
distribution, because of its tractable properties (\citealp{Klein1992}; \citealp{Nielsen1992}; \citealp{Vaupel1979}).
For example \cite{Martinussen2011} used gamma frailty in the Aalen additive
model, and \cite{Zeng2009} studied transformation models with gamma frailty for multivariate
survival analysis, in which $\eta=0$ (no frailty) is also allowed. In addition, \cite{Abbring2007}
establish the fact that  conditional frailty among survivors is always gamma distributed if and only if the
frailty distribution is regularly varying at zero.

In this paper, we consider Cox's regression model (\citealp{Cox1972}) with a multiplicative frailty factor on which
no specific model is imposed, as biased estimators might be obtained if the frailty model is misspecified
(\citealp{Abbring2007}; \citealp{Hougaard1984}).  Similar to a general mixture model problem, the frailty survival
models with unknown frailty distribution, suffers from identification issues. Although, when all the covariates
variables are continuous with continuous distribution, \citealp{Eleber1982} shows that given the distribution
of the time duration variable, all the three multiplicative factors are identified, his theoretical result does
not solve the identifiability issue in the general sense. For instance, when there is a discrete covariate
then identifiability requires the corresponding regression coefficient to be limited to a known compact set
(\citealp[ch.2]{Horowitz2010}). Consequently, the estimation method developed using unknown transformation
models in \cite{Horowitz1999}, although useful for econometric models, has the same limitation.
In this paper, however, we use the idea of continuous mixture models with relatively small mixing variation when
compared to the total variation. The ``smallness'' restriction allows us to approximate the corresponding mixture
model by a new model which brings nice geometric and inferential properties including identifiability in the general
sense.

This paper is organized as follows. Notation, motivation and the main result of the paper, including our proposed
method, the local mixture method, are presented in  Section \ref{Methodology} for a fixed hazard frailty model. We
estimate the regression coefficients
through a two step optimization process, the first of which is implemented using our proposed algorithm. The algorithm
comprises using a gradient search method on smooth manifolds embedded in finite dimensional Euclidean spaces. The methodology
is generalized to non-parametric baseline hazard in Section \ref{Non-parametric Hazard}. Section
\ref{Simulation} is devoted to simulation studies, illustrating that when frailty is generated from a left-skewed model the
local mixture method returns both smaller bias and standard deviation compared to the existed Expected-Maximization method in
\cite{Klein1992} which assumes a gamma frailty. In Section \ref{Example_survival}, rhDNAse data is analyzed and the results are
compared, for both treatment and placebo group, with the method in \cite{Klein1992}.

\section{Methodology}\label{Methodology}
Throughout this section, we follow the notation and definitions in \cite{Lawless1981} and \cite{Gorfine2006}.
Let $(T^0_{i},C_{i})$, for $i=1,\cdots,n$, be the failure time and censoring time of the $i$th individual,
and also let $X$ be the $n\times p$ design matrix of the covariate vectors. Define $T_i = \min(T^0_{i},C_i)$
and $\delta_i=I(T^0_{i} < C_i)$, where $I(\cdot)$ is an indicator function. In addition, associated with the $i$th
individual, an unobservable covariate $\theta_i$, the frailty, is assumed, where $\theta_i$'s follow some
distribution, $Q$.

Suppose, at least initially, that the marginal lifetime distribution given frailty is an exponential model with
rate $\lambda_0$. Then the baseline hazard function is $\lambda_0(t)=\lambda_0$. Adapting the regression model
in \cite{Cox1972}, the hazard function for the $i$th individual conditional on the frailty $\theta_i$ takes the
following form, 
\begin{eqnarray}
\lambda_i=\theta_i\, \lambda_0 \exp\{X_i\beta^{T} \},\label{Exp_hazard}
\end{eqnarray}
where $X_i$ is the $i$th row of $X$ and $\beta^T=(\beta_0,\cdots,\beta_{p-1})$ is a $p$-vector parameter.
For the $i$th individual with the hazard function in Equation (\ref{Exp_hazard}), the cumulative hazard
function and survival function are, respectively, defined as
\begin{eqnarray}
\Lambda_i(t)=\int_{0}^{t}{\lambda_i(u)\,du},\hspace{1cm} S_i(t)=\exp\{-\Lambda_i(t)\}. 
\end{eqnarray}
Following the arguments in \cite{Gorfine2006}, we assume that the frailty $\theta$ is independent
of $X$, and further that, given $X$ and $\theta$, censoring is independent and noninformative for $\theta$
and $(\lambda_0,\beta)$. Then, for the exponential failure time, the
full likelihood function for the parameter vector $(\lambda_0,\beta)$ is written as
\begin{eqnarray}
 L(\lambda_0,\beta)={\prod_{i=1}^{n}  \int \left[\left(\theta\, \lambda_0\, e^{X_i \beta}\right)^{\delta_i} \exp \left\{-\theta\,T_i \lambda_0\, e^{X_i \beta}\right\}\right] dQ(\theta)  },\label{likelihood_frailty} 
\end{eqnarray}
and the log likelihood function is
\begin{eqnarray}
l(\lambda_0,\beta)=\sum_{i=1}^{n}{\delta_i[\log\lambda_0+ X_i \beta^T]\ + \sum_{i=1}^{n}\log\int{\theta^{\delta_{i}}\,\exp\{-\theta \lambda_0\, T_i \,e^{X_i \beta^T}\}}\, dQ(\theta)}\label{log_like_exp}
\end{eqnarray}

\subsection{Local Mixture Method}\label{Local Mixture Method}
As mentioned in Section 1, it is common in the literature to assume a gamma model,
with $E_Q(\theta)=1$ and variance $\eta$, for $\theta$ and apply the Expectation-Maximization
algorithm for maximizing the log likelihood function in Equation (\ref{log_like_exp}). However,
since frailty model misspecification causes biased coefficient estimation, we relax the gamma
restriction and assume a more general family of distributions for the frailty.
Specifically, $Q$ is supposed to be a proper dispersion model, on an interval
$\Theta$, about the mean value $\vartheta=1$ and with dispersion parameter $\epsilon>0$ (\citealp{Jorgensen1997}). This
assumption allows us to apply Laplace's approximation to the integral in Equation (\ref{log_like_exp}), \cite[ch.6]{Small2010}. In
other words, we only assume, that beyond the observed covariates, there is still a source of
heterogeneity remaining which is unknown and has a relatively small variation about its average
$\vartheta=1$ when compared to the total variation. 
In more detail, let 
$$ f(T_i,\beta,\theta)= \theta^{\delta_{i}}\,\exp\{-\theta \lambda_0 \,T_i\, e^{X_i \beta^T}\},$$ 
then by applying Laplace's expansion to the integral in Equation (\ref{log_like_exp}) as $\epsilon\rightarrow 0$,
we obtain

\begin{eqnarray}
 \int_{\Theta}{f(T_i,\beta,\theta)\,dQ(\theta)}= f(T_i,\beta,\vartheta)+\sum_{j=2}^{k}{\lambda_j\, f^{(j)}(T_i,\beta,\vartheta) }+O\left( \epsilon^{\lfloor \frac{k+1}{2} \rfloor}\right),\label{LMM_exp}
\end{eqnarray}
where $f^{(j)}(T_i,\beta,\vartheta)=\frac{\partial}{\partial \theta^j}\mid_{\theta=\vartheta} f(T_i,\beta,\theta)$,
and $\lambda=(\lambda_2,\cdots,\lambda_k)$ is a parameter vector as a function of $\epsilon$.

The model in Equation (\ref{LMM_exp}) is similar to the local mixture models, introduced in \cite{Marriott2002}
and developed in \cite{Anaya-Izquierdo2007}. For a density function $f$ and a proper dispersion mixing
distribution $Q$, they proposed the local mixture model for studying the  behavior of continuous mixture
models. For any fixed $\vartheta$, the finite dimensional parameter vectors $\lambda$, which represent
the mixing distribution through its central moments, are restricted to a closed convex subspace,
$\Lambda(\vartheta)$. Such a subspace is the intersection of half-spaces containing $\Lambda(\vartheta)$
and therefore bounded by a set of boundary hyper-planes. This boundary, induced by positivity considerations
is  called the hard boundary. 

Local mixing, as defined in \cite{Marriott2002}, extends a parametric model to a larger and more flexible space
of densities which holds nice geometric and inferential properties. Identifiability is achieved by omitting the
first derivative and Fisher orthogonality of the higher derivatives.
It can be shown that this family is richer than the family of mixture models, in the sense that compared to a regular model
with the same mean it can produce both higher and lower dispersion \cite{Marriott2002}. Thus, as shown by
numerical results in Section \ref{Simulation}, local mixtures are quite flexible for modeling unobserved
variation.
Further properties of the local mixture models, such as log concavity of the likelihood, as a function of
$\lambda$ for any fixed $\vartheta$, are studied in \cite{Anaya-Izquierdo2007}. In addition, as they argue,
the notion of ``smallness'' in the local mixture models implies that mixing variation is not the dominant
source of total variation. Hence, in this paper also, frailty is assumed be responsible for a relatively
small part of total variation in the problem, yet remains important from inference point of view.

Substituting Equation (\ref{LMM_exp}) in Equation (\ref{log_like_exp}) we obtain
\begin{eqnarray}
l(\lambda_0,\beta,\lambda)&=&\sum_{i=1}^{n}{\left(\delta_i[\log\lambda_0+ X_i \beta^T]+ \log f(T_i,\beta,\vartheta)\right)}\nonumber\\ 
 &&+  \sum_{i=1}^{n}\log \left( 1+\sum\nolimits_{j=2}^{k}{\lambda_j\, A_{j}(\delta_i,y_i)} \right), \hspace{1cm} \lambda\in \Lambda(\vartheta)\label{log_like_exp1}
\end{eqnarray}
in which $A_{j}(\delta_i,y_i)= \frac{f^{(j)}(T_i,\beta,\vartheta)}{f(T_i,\beta,\vartheta)}$,
and $y_i=\lambda_0 T_i \exp\{X_i \beta^T\}$ is positive. Assuming $\vartheta=1$, i.e., the
average of the frailty distribution is equal to $1$, we maximize Equation (\ref{log_like_exp1}) when 
estimating $\beta$, where $\lambda_0$ and $\lambda$ are consider as nuisance parameters which
are required to be obtained in advance. Thus, a profile likelihood optimization method is
employed. That is, we first maximizes for $\lambda$ over $\Lambda(\vartheta)$ to obtain $\hat{\lambda}$ 
and then maximize $l_p(\beta)=l(\hat{\lambda}_0,\beta,\hat{\lambda})$ to estimate $\beta$. $\hat{\lambda}_0$
is imputed into the loglikelihood function at each iteration. A method for computing $\hat{\lambda}_0$
is described in Section \ref{Non-parametric Hazard}, for a more general situation.

\subsection{Maximum Likelihood Estimator for $\lambda$}\label{Algorithm}

The $\lambda$ parameter space, $\Lambda(\vartheta)$, is characterized as the space of all 
$\lambda$'s such that, for all $y>0$ 
\begin{eqnarray}
1+\sum\nolimits_{j=2}^{k}{\lambda_j\, A_{j}(\delta_i,y)}>0, \hspace{.5cm} \delta_i=0,1 \label{positivity_exp}
\end{eqnarray}
where, $A_{j}(\delta_i,y)$, as a function of $y>0$, is a polynomial of degree $j$. For $k=4$,
the inequality in Equation (\ref{positivity_exp}) is equivalent to the simultaneous positivity conditions
of the following two quartics,
\begin{eqnarray}
p(y)&=& \lambda_4 y^4-\lambda_3 y^3+\lambda_2 y^2+1,\nonumber\\
q(y)&=& \lambda_4 y^4-(4 \lambda_4+\lambda_3) y^3+(3\lambda_3 +\lambda_2)y^2-2\lambda_2 y+1\label{supp_poly}.
\end{eqnarray}
for which we can prove the following result.

\begin{lemma}\label{Lemma1}
If $\Lambda_1$ and $\Lambda_2$ are the space of all $\lambda=(\lambda_2,\lambda_3,\lambda_4)$
such that $p(y)$ and $q(y)$ are positive on $y>0$, respectively, then $\Lambda_2 \subset \Lambda_1$. 
\end{lemma}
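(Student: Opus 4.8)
The plan is to reduce the claim to an exact algebraic relation between the two quartics in Equation (\ref{supp_poly}). First I would differentiate $p$ and check, by matching the coefficients of $1,y,y^2,y^3,y^4$, that
\begin{eqnarray}
q(y) = p(y) - p'(y) \qquad \text{for all } y. \nonumber
\end{eqnarray}
This is not an accident: the ratios $A_j(1,y)$ attached to $\delta_i=1$ are obtained from the ratios $A_j(0,y)=(-y)^j$ attached to $\delta_i=0$ by the operator $g\mapsto g-g'$ (differentiation in $y$), since passing from $\delta_i=0$ to $\delta_i=1$ multiplies $f$ by $\theta$; applying this operator termwise to $p(y)=1+\sum_j\lambda_j A_j(0,y)$ and using that the constant term is killed by differentiation gives exactly $q$.

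Given this identity, the second step is to linearize the positivity problem with an integrating factor. Set $h(y)=e^{-y}p(y)$. Then
\begin{eqnarray}
h'(y) = e^{-y}\bigl(p'(y)-p(y)\bigr) = -e^{-y}q(y), \nonumber
\end{eqnarray}
so the sign of $h'$ on $(0,\infty)$ is the opposite of the sign of $q$. Hence if $\lambda\in\Lambda_2$, i.e.\ $q(y)>0$ for every $y>0$, then $h'(y)<0$ on $(0,\infty)$, so $h$ is strictly decreasing there.

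The third step is a limit argument at $+\infty$. Since $h$ is a polynomial times a decaying exponential, $h(y)\to 0$ as $y\to\infty$. A strictly decreasing function on $(0,\infty)$ with limit $0$ at $+\infty$ cannot take a nonpositive value: if $h(y_0)\le 0$, then on the compact interval $[y_0,y_0+1]$ the continuous function $h'$ attains a maximum $-\delta<0$, so $h(y_0+1)\le h(y_0)-\delta\le-\delta$, and by monotonicity $h(y)\le-\delta$ for all $y\ge y_0+1$, contradicting $h(y)\to 0$. Therefore $h(y)>0$, equivalently $p(y)=e^{y}h(y)>0$, for all $y>0$; that is, $\lambda\in\Lambda_1$, which proves $\Lambda_2\subset\Lambda_1$. (The same identity $q=p-p'$ shows the inclusion is proper, since $p'$ can be large and positive while $p>0$, making $q$ negative somewhere.)

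The only real obstacle is spotting the identity $q=p-p'$ and the attendant integrating-factor substitution $h=e^{-y}p$; once these are in hand the argument is an elementary monotonicity-and-limit observation and needs no case analysis on the roots of the quartics. If one prefers to avoid recognizing the operator-level explanation, the identity can simply be verified by the direct (routine) coefficient computation.
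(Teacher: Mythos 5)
Your proof is correct, and it pivots on exactly the same key observation as the paper's: the identity $q(y)=p(y)-p'(y)$, which you verify by coefficient matching (and also explain structurally via the extra factor $\theta^{\delta_i}$ — a nice touch the paper omits). Where you diverge is in how you exploit that identity. The paper evaluates at the minimizer of $p$: since $\lambda_4>0$ forces $p$ to attain a minimum on $[0,\infty)$ at some $y_1$, either $y_1>0$ with $p'(y_1)=0$ so that $p(y_1)=q(y_1)>0$, or $y_1=0$ with $p(0)=1$; positivity of $p$ follows. You instead introduce the integrating factor $h=e^{-y}p$, note $h'=-e^{-y}q<0$, and conclude from $h(y)\to 0$ at infinity that $h$, hence $p$, is positive. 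Both completions are short; yours avoids the case split at $y_1=0$ and does not need the preliminary remark that $\lambda_4>0$ (indeed it works verbatim for any polynomial degree $k$, which is relevant since the paper only fixes $k=4$ for concreteness), while the paper's stays entirely inside elementary calculus of the quartic and makes the geometric content — $p$ and $q$ agree wherever $p$ is stationary — more visible. Your concluding parenthetical about the inclusion being proper is only a sketch, but the lemma does not require it.
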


\begin{proof}%
First note that $\lambda_4>0$ is a necessary condition hence $P(y)$ has a minimum. Also $q(y)=p(y)-p^{\prime}(y)$,
$p(0)=1$ and $p^{\prime}(0)=0$. 
For all $y>0$, If $q(y)>0$, then $p(y)>p^{\prime}(y)$. Since $p(y)$ attains its minimum value at some $y_1$ for which
$p^{\prime}(y_1)=0$, therefore, $p(y)\geq p(y_1)>p^{\prime}(y_1)=0$. If $y_1=0$, we have $p(y)>p(0)=1$. 
\end{proof}
Lemma \ref{Lemma1} implies that $\Lambda(\vartheta)$ can be characterized just by investigating the positivity domain of
$q(y)$, for which the following theorem is required (see Ulrich \& Watson 1994).

\begin{theorem}\label{Theorem1}
For the quartic polynomial $p(x)=ax^4+bx^3+cx^2+dx+e$, with $a>0$ and $e>0$, define
$\alpha =b\,a^{-3/4}e^{-1/4} , \hspace{.2cm}\beta=c\,a^{-1/2}e^{-1/2} , \hspace{.2cm}\gamma=d\,a^{-1/4}e^{-3/4},$
$$\Delta= 4[\beta^2-3\alpha\gamma+12]^3 -[72\beta+9\alpha \beta \gamma- 2\beta^3 -27\alpha^2 -27\gamma^2]^2$$
\hspace{4cm}$L_1= (\alpha-\gamma)^2-16(\alpha+\beta+\gamma+2)$\\

\hspace{3cm}$L_2 = (\alpha-\gamma)^2-\frac{4(\beta+2)}{\sqrt{\beta-2}}\left(\alpha+\gamma+4\sqrt{\beta-2}\right).$\\
Then, $p(x)\geq0$ for all $x>0$ if and only if
\begin{itemize}
 \item $\beta<-2$ \hspace{.2cm},\hspace{.2cm} $\Delta\leq 0$ \hspace{.2cm},\hspace{.2cm} $\alpha+\gamma>0$
 \item $ -2 \leq\beta\leq 6 \hspace{.2cm},\hspace{.2cm} (\Delta\leq 0 \,\,,\,\,\alpha+\gamma>0) \hspace{.2cm}or\hspace{.2cm} (\Delta\geq 0 \,\,,\,\, L_1 \leq 0)$ 
 \item $ 6 <\beta  \hspace{.2cm},\hspace{.2cm} (\Delta\leq 0 \,\,,\,\,  \alpha+\gamma>0) \hspace{.2cm}or\hspace{.2cm} (\alpha> 0 \,\,,\,\, \gamma>0)  \hspace{.2cm} or \hspace{.2cm}(\Delta\geq 0 \,\,,\,\, L_2 \leq 0)$ 
\end{itemize}
\end{theorem}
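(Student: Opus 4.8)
The plan is to strip the two free positive parameters $a,e$ by homogeneity, then translate nonnegativity on the half-line into a statement about the real factorization of the quartic, and finally carry out the elimination that produces the closed-form conditions. Substituting $x\mapsto (e/a)^{1/4}x$ turns $p$ into a positive multiple of $x^4+\alpha x^3+\beta x^2+\gamma x+1$, with exactly the scale-invariant combinations $\alpha,\beta,\gamma$ of the statement; since this map sends $(0,\infty)$ onto itself, it suffices to characterize when the monic quartic $g(x)=x^4+\alpha x^3+\beta x^2+\gamma x+1$ satisfies $g(x)\ge0$ for all $x>0$. As $g(0)=1>0$ and $g(x)\to+\infty$, this holds iff $g$ has no root of odd multiplicity in $(0,\infty)$, which is equivalent to $g$ admitting a real factorization $(x^2+ux+v)(x^2+wx+z)$ with $v>0$, $z>0$, $u\ge-2\sqrt{v}$, $w\ge-2\sqrt{z}$: group even-multiplicity positive roots into perfect squares, pair up the remaining (necessarily nonpositive, since $g(0)\ne0$) real roots, and keep the irreducible quadratic factors, which already have positive constant term; conversely a monic quadratic with positive constant term is nonnegative on $(0,\infty)$ precisely when its linear coefficient is at least $-2$ times the square root of its constant term.

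Comparing constant terms gives $vz=1$, so $z=1/v$; solving the two linear equations $u+w=\alpha$ and $u/v+vw=\gamma$ for $u,w$ in terms of $v$ (the exceptional case $v=1$, i.e.\ the palindromic quartic $\alpha=\gamma$, handled separately) reduces the problem to the question: does there exist $v>0$ with
\begin{equation*}
u=\frac{v(\alpha v-\gamma)}{v^2-1},\qquad w=\frac{\gamma v-\alpha}{v^2-1},\qquad v+\frac{1}{v}+uw=\beta,\qquad u\ge-2\sqrt{v},\qquad w\ge-\frac{2}{\sqrt{v}}\,?
\end{equation*}
The last equation is reciprocal in $v$, hence collapses to a cubic in $s=v+1/v\ (\ge 2)$ whose discriminant is, up to a positive factor, the quantity $\Delta$; indeed $\Delta=4\Delta_0^3-\Delta_1^2$ with $\Delta_0=\beta^2-3\alpha\gamma+12$ and $\Delta_1=2\beta^3-9\alpha\beta\gamma+27\alpha^2+27\gamma^2-72\beta$, which is $27$ times the discriminant of $g$. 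Thus $\Delta<0$ signals two real and two complex roots, $\Delta>0$ four real or four complex, and $\Delta=0$ a repeated root. When exactly two roots are real their product is $1/v>0$, so they share a sign and $g\ge0$ forces them both $\le0$, which works out to $\alpha+\gamma>0$; when all four roots are real they must all be $\le0$ (forcing $\alpha>0$, $\gamma>0$ by Vieta); when all four are complex $g$ is automatically positive, and $L_1\le0$, $L_2\le0$ are exactly the algebraic certificates separating these configurations, obtained by demanding that the reduced cubic have an admissible root $s\ge2$ meeting the two square-root inequalities.

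The breakpoints $\beta=-2$ and $\beta=6$ are the values at which the form of the answer changes. A four-complex-root factorization with $vz=1$ has $u^2<4v$, $w^2<4z$, hence $\beta=v+z+uw>v+z-4\sqrt{vz}\ge-2$, so below $-2$ only the two-real-root branch survives, which is why the first case has a single clause; and $\beta=6$ is the value of the invariant $\beta$ attained by $(x+t)^4$ for every $t>0$, above which genuinely spread four-negative-real-root quartics appear, which is why the $\beta>6$ case carries the extra alternative $(\alpha>0,\gamma>0)$.

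I expect the elimination to be the main obstacle: pushing the reciprocal cubic through explicitly, matching $\Delta_0,\Delta_1,L_1,L_2$ with the admissibility (``a root $s\ge2$ together with the two square-root inequalities'') of that cubic, and then checking that the three displayed lists are exactly the union of feasible cases, with special care on the closed boundaries $\Delta=0$, at $\beta=-2,6$, and on the palindromic sub-case $v=1$, so that no configuration is dropped or double-counted. An equivalent route that trades this elimination for linear algebra is to invoke the Markov--Lukács representation $g(x)=Q(x)^2+xR(x)^2$ with $\deg Q\le2$, $\deg R\le1$ and read the conditions off the positive-semidefiniteness of the associated $3\times 3$ and $2\times 2$ Gram matrices; the inequalities in the statement are then the explicit solvability conditions of that semidefinite system.
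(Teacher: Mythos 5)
First, a point of reference: the paper itself does not prove Theorem~\ref{Theorem1}; it is quoted verbatim from Ulrich and Watson (1994), so there is no in-paper argument to compare yours against. Your reduction is nevertheless the ``right'' one and essentially reconstructs the published derivation: the rescaling $x\mapsto (e/a)^{1/4}x$ does produce $x^4+\alpha x^3+\beta x^2+\gamma x+1$ with exactly the stated invariants; the factorization criterion for nonnegativity on $(0,\infty)$ (constant terms positive, linear coefficient at least $-2\sqrt{\text{constant term}}$) is correct; the constraint $vz=1$ and the collapse of $v+1/v+uw=\beta$ to a cubic in $s=v+1/v$ is exactly the resolvent cubic $s^3-\beta s^2+(\alpha\gamma-4)s+4\beta-\alpha^2-\gamma^2=0$, whose discriminant is $\tfrac{1}{27}\Delta=\mathrm{disc}(g)$ as you assert. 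All of this checks out.

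The gap is that the proof stops where the theorem actually lives. The content of the statement is the exact translation of ``there exists an admissible $(v,u,w)$'' into the displayed inequalities, and none of those translations is carried out: you assert that the two-real-root case ``works out to $\alpha+\gamma>0$,'' that $L_1\le 0$ and $L_2\le 0$ ``are exactly the algebraic certificates,'' and that $\beta=-2,6$ are where the case structure changes, but these are precisely the eliminations you defer, and you say so yourself. Without them the argument is a program, not a proof. Two of the shortcut claims are also false as stated on the boundary $\Delta=0$, which the theorem deliberately includes: ``$g\ge 0$ forces the two real roots to be $\le 0$'' and ``when all four roots are real they must all be $\le 0$'' both fail for even-multiplicity positive roots (e.g.\ $(x-1)^2(x^2+px+1)$ with $|p|<2$, or $(x-1)^2(x+1)^2$, where $\alpha+\gamma\le 0$ and the theorem admits the polynomial only through $\Delta\ge 0,\ L_1\le 0$). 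Your quadratic-factor framework does accommodate these cases, but the Vieta-style shortcuts you substitute for the elimination do not, so the case analysis as written would misclassify exactly the boundary configurations that $L_1$ and $L_2$ exist to capture. To close the proof you would need to push the reciprocal cubic through explicitly (or carry out the Markov--Luk\'{a}cs/Gram-matrix alternative you mention) and verify that the union of feasible cases is precisely the three displayed lists; alternatively, the honest route for this paper is simply to cite Ulrich and Watson, as the authors do.
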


Due to the  existence of hard boundaries, obtaining $\hat{\lambda}$ is a nonstandard inference
problem. A suitable maximization algorithm should be flexible enough to converge to a turning
point $\hat{\lambda}$ in the interior if $\hat{\lambda}\in \Lambda(\vartheta)$; otherwise, it
must converge to the unique boundary point with the highest likelihood, say $\hat{\lambda}_b$
(\citealp[p.337]{Berger1987}). In the rest of this section, we propose a gradient based
optimization algorithm, utilizing the geometry of $\Lambda(\vartheta)$ and concavity of the
local mixture term in Equation (\ref{log_like_exp1}) for finding the global maximum value $\hat{\lambda}$
or $\hat{\lambda}_b$ in two major steps. The following lemma reveals the geometry of the boundary
surface of $\Lambda(\vartheta)$, as a smooth manifold embedded in $R^{k-1}$, where $k$ is the
order of the corresponding local mixture model.

\begin{lemma}\label{Lemma2}
The boundary of the parameter space $\Lambda(\vartheta)$, shown by $\Lambda_b(\vartheta)$, is a locally smooth
manifold. 
\end{lemma}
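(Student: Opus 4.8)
The plan is to realize $\Lambda_b(\vartheta)$ as (the regular part of) the envelope of the one–parameter family of hyperplanes cut out by the positivity constraints, and then to read off smoothness from the implicit function theorem. By Lemma~\ref{Lemma1} the two constraints in (\ref{supp_poly}) collapse to one, so $\Lambda(\vartheta)=\{\lambda=(\lambda_2,\lambda_3,\lambda_4): q(y;\lambda)\ge 0 \text{ for all } y>0\}$, the intersection of the closed half–spaces $\{\lambda: q(y;\lambda)\ge 0\}$, $y>0$ (each genuinely a half–space, since $q(y;\cdot)$ is affine in $\lambda$ with constant term $1$). Since $q(0;\lambda)=1>0$ and $q$ is coercive when $\lambda_4>0$, the topological interior of $\Lambda(\vartheta)$ is $\{\lambda_4>0,\ q(\,\cdot\,;\lambda)>0 \text{ on }(0,\infty)\}$, and a boundary point $\lambda^{*}$ satisfies either $\lambda_4^{*}=0$ (a piece of the plane $\{\lambda_4=0\}$, which is trivially a smooth face) or $\lambda_4^{*}>0$ with $q(y_0;\lambda^{*})=0$ for some $y_0\in(0,\infty)$. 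In the latter, generic, case $y_0$ is an interior minimizer of $q(\,\cdot\,;\lambda^{*})$ at level $0$, forcing $\partial_y q(y_0;\lambda^{*})=0$: so $y_0$ is a positive double root, and this curved stratum is the heart of the argument.

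Next I would introduce $\Phi:R^{3}\times(0,\infty)\to R^{2}$, $\Phi(\lambda,y)=\big(q(y;\lambda),\,\partial_y q(y;\lambda)\big)$, and set $\mathcal{Z}=\Phi^{-1}(0,0)$. The claim is that $(0,0)$ is a regular value, so that $\mathcal{Z}$ is a smooth $2$–manifold in $R^{3}\times(0,\infty)$. It suffices that the $2\times 3$ block $D_\lambda\Phi$, with rows $\big(\partial_{\lambda_2}q,\partial_{\lambda_3}q,\partial_{\lambda_4}q\big)=\big(y^2-2y,\ -y^3+3y^2,\ y^4-4y^3\big)$ and $\big(\partial_{\lambda_2}\partial_y q,\partial_{\lambda_3}\partial_y q,\partial_{\lambda_4}\partial_y q\big)=\big(2y-2,\ -3y^2+6y,\ 4y^3-12y^2\big)$, has rank $2$ for every $y>0$; its $(\lambda_2,\lambda_3)$–minor computes to $-y^{2}\,(y^{2}-4y+6)$, which is strictly negative for $y\neq 0$ because $y^2-4y+6$ has negative discriminant. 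Hence the rank is $2$, $\mathcal{Z}$ is a smooth $2$–manifold, and in fact, since the $(\lambda_2,\lambda_3)$–minor never vanishes, the equations $q(y_0;\lambda)=\partial_y q(y_0;\lambda)=0$ can be solved for $(\lambda_2,\lambda_3)$ as smooth functions of $(y_0,\lambda_4)$, exhibiting $\mathcal{Z}$ locally as a graph.

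Finally I would project back: let $\pi(\lambda,y)=\lambda$. The kernel of $D\pi$ on $T_{(\lambda^{*},y_0)}\mathcal{Z}$ consists of tangent vectors to $\mathcal{Z}$ with vanishing $\lambda$–component, i.e.\ multiples of $\partial_y$ lying in $\ker D\Phi$; but $D\Phi(\partial_y)=\big(\partial_y q,\ \partial_y^2 q\big)|_{(\lambda^{*},y_0)}=\big(0,\ \partial_y^2 q(y_0;\lambda^{*})\big)$, and $\partial_y^2 q(y_0;\lambda^{*})>0$ because $y_0$ is an order–exactly–two minimum of the quartic (the excluded case $q=\lambda_4(y-y_0)^4$ being lower–dimensional). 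So $\pi|_{\mathcal{Z}}$ is an immersion there; restricting to a neighbourhood on which $q(\,\cdot\,;\lambda)$ has a unique positive double root and stays positive elsewhere on $(0,\infty)$ — an open condition on $\mathcal{Z}$ — makes it an injective immersion onto a relatively open piece of $\Lambda_b(\vartheta)$, which is therefore a smooth embedded $2$–dimensional submanifold of $R^{3}$ near such a point; equivalently it is the graph over $(y_0,\lambda_4)$ produced by the implicit function theorem in the previous step. The degenerate loci left aside — triple/quadruple roots, two coincident positive double roots, and the seam $\{\lambda_4=0\}$ where the curved stratum meets the flat face — are at most one–dimensional (this uses that the family of quartics here is pinned to a hyperplane by the fixed constant term, so that having a \emph{second} double root is a one–parameter condition), and that is precisely what the word ``locally'' in the statement absorbs.

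The step I expect to be the real obstacle is the nondegeneracy input of the middle paragraph: establishing that the $\lambda$–gradients of $q$ and of $\partial_y q$ are never parallel for $y>0$, so that $0$ is a regular value of $\Phi$. Here it reduces to the single minor computation above, but one must also be careful that $\mathcal{Z}$ is strictly larger than the lift of $\Lambda_b(\vartheta)$ — it contains $\lambda$ for which $q$ has a positive double root yet dips negative elsewhere — so the passage from ``$\mathcal{Z}$ is a manifold'' to ``$\Lambda_b(\vartheta)$ is a manifold'' needs the openness remark that $\{q\ge 0 \text{ on }(0,\infty)\}$ is relatively open in $\mathcal{Z}$ near a generic double root. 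Once those two points are secured, the rest is the standard regular–value and injective–immersion machinery.
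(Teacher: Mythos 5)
Your proposal takes essentially the same route as the paper: both identify the curved part of $\Lambda_b(\vartheta)$ with the envelope of the supporting hyperplanes, i.e.\ the solution locus of $q(y)=0$, $\partial_y q(y)=0$, and invoke the implicit function theorem (the paper solves for $(\lambda_3,\lambda_4)$ as functions of $(y,\lambda_2)$, with denominator $y^2-6y+12>0$ playing the role of your nonvanishing minor $-y^2(y^2-4y+6)$). Your version is more complete than the paper's — the regular-value check, the verification that the projection to $\lambda$-space is an immersion, the openness remark separating $\mathcal{Z}$ from the actual boundary, and the cataloguing of degenerate strata are all glossed over in the published proof — but the underlying argument is the same.
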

\begin{proof}
see Appendix. 
\end{proof}

\subsection*{Algorithm}

\begin{itemize}
 \item [0:] Start with an initial value $\lambda^{(0)} \in \,\Lambda(\vartheta)$. 
 \item [1:] Run Newton-Raphson algorithm, until either algorithm converges to
                       $\hat{\lambda}\in \Lambda(\vartheta)$ (then stop) or the first update
                       $\lambda^{(j)}\notin \Lambda(\vartheta)$ is obtained (go to step 2). 
 \item [2:] Find the boundary point $\lambda^{\star}$ on the line segment between $\lambda^{(j-1)}$
                       and $\lambda^{(j)}$, let $\lambda^{(j-1)}=\lambda^{\star}$ and run the following steps.
                       \begin{itemize}
                        \item [2a:] Find the gradient $g_j$ and the supporting plane $t_j$ at $\lambda^{(j-1)}$.
                        \item [2b:] Update $\lambda^{(j)}=\lambda^{(j-1)} + (\Pi_j H_j^{-1}) (\Pi_j g_{j})$, (Figure \ref{algorithm}, middle panel, in Appendix).
                        \item [2c:] Update $\lambda^{\star}$ by finding the boundary point on the line segment
                        in the direction of $N_j$, the normal vector of $t_j$, passing through $\lambda^{(j)}$ (Figure \ref{algorithm}, right panel, in Appendix).
                        \item [2d:] Let $\lambda^{(j-1)}=\lambda^{\star}$ and repeat $(2a)$-$(2c)$, until convergence;
                        that is $||P_{t_{j}} \left(g_{j}\right)||<\epsilon$, for a small $\epsilon>0$.
                       \end{itemize}
\end{itemize}
Step 1, obviously applies the well understood Newton-Raphson algorithm on the interior of
$\Lambda(\vartheta)$ as a subspace of $R^{k-1}$. In Step 2, however, a generalization of
Newton's method on smooth manifolds is exploited. Applying Lemma 2 and
using the technical details in Appendix we can prove the following result
(\citealp{Shub1986}; \citealp{Ulrich1994}).

\begin{theorem}\label{Theorem2}
The algorithm either converges to $\hat{\lambda}$ quadratically in step(1), or there
is an open neighborhood $V\subset \Lambda_b(\vartheta)$ of $\hat{\lambda}_b$, that for any
$\lambda^{\star}\in V$ it converges to $\hat{\lambda}_b$ in quadratic order, in step(2).  
\end{theorem}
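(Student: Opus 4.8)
The plan is to argue along the dichotomy that the algorithm itself exhibits. Throughout, fix $\beta$ and the imputed $\hat\lambda_0$, and write $\ell(\lambda)=l(\hat\lambda_0,\beta,\lambda)$ for the objective in Equation~(\ref{log_like_exp1}) viewed as a function of $\lambda\in\Lambda(\vartheta)$. The starting point is the structural fact, established in \cite{Anaya-Izquierdo2007} and already invoked in Section~\ref{Local Mixture Method}, that $\ell$ is strictly concave on the convex set $\Lambda(\vartheta)$; moreover $\ell$ is smooth (indeed real-analytic) there, since each $A_j(\delta_i,y)$ is polynomial in $y$ and the argument of every logarithm in Equation~(\ref{log_like_exp1}) is positive on $\Lambda(\vartheta)$. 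Hence the constrained maximiser is unique and is either an interior turning point $\hat\lambda$ with $\nabla\ell(\hat\lambda)=0$, or it lies on $\Lambda_b(\vartheta)$, in which case it is the unique boundary maximiser $\hat\lambda_b$.

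\textbf{Interior case.} Here Step~1 is exactly the classical Newton--Raphson iteration for $\nabla\ell=0$. By strict concavity the Hessian $H$ is negative definite, hence nonsingular, at $\hat\lambda$, and the Hessian is locally Lipschitz by smoothness of $\ell$. The standard local-convergence theorem for Newton's method then supplies a neighbourhood of $\hat\lambda$ on which the iterates stay in $\Lambda(\vartheta)$ and converge to $\hat\lambda$ quadratically; this is the first alternative. If instead some update leaves $\Lambda(\vartheta)$, the algorithm branches to Step~2 and we are reduced to the boundary case below.

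\textbf{Boundary case.} I would identify Steps~(2a)--(2d) with Newton's method for the tangential gradient field on the manifold $\Lambda_b(\vartheta)$, which by Lemma~\ref{Lemma2} is a locally smooth submanifold of $R^{k-1}$. The first-order optimality condition at $\hat\lambda_b$ is $P_{t}(g)=0$, where $t$ is the tangent (supporting) hyperplane at $\hat\lambda_b$ and $g=\nabla\ell$; Step~(2b) performs the Newton update for this equation in the tangent space, with $\Pi_j H_j^{-1}\Pi_j$ playing the role of the inverse Riemannian Hessian, and Step~(2c) is a retraction that returns the tangentially updated iterate to $\Lambda_b(\vartheta)$ along the normal direction $N_j$ of $t_j$. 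I would then verify two things. First, that this retraction is well defined and second-order accurate near $\hat\lambda_b$: since $\Lambda_b(\vartheta)$ is smooth it is locally the graph of a $C^2$ function over its tangent plane, so the normal line through a nearby point meets $\Lambda_b(\vartheta)$ transversally in a unique point and the correction it introduces is $O(\|\lambda^{(j)}-\hat\lambda_b\|^2)$, hence does not spoil the Newton rate. Second, that the nondegeneracy hypothesis of the manifold-Newton convergence theorem of \cite{Shub1986} (see also \cite{Ulrich1994}) holds: the Riemannian Hessian of $\ell|_{\Lambda_b(\vartheta)}$ at $\hat\lambda_b$ is $\Pi H\Pi$ plus the second-fundamental-form term contracted with the normal component of $g$, and one checks this is negative definite using strict concavity of $\ell$ together with the complementary-slackness sign of the normal derivative of $\ell$ at a constrained maximum. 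The cited theorem then yields an open neighbourhood $V\subset\Lambda_b(\vartheta)$ of $\hat\lambda_b$ from which the iteration $(2a)$--$(2d)$ converges to $\hat\lambda_b$ with quadratic order, which is the second alternative.

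The step I expect to be the main obstacle is the boundary analysis, and within it two points in particular: showing that ``tangential Newton step followed by the normal retraction of~(2c)'' is a genuine second-order retraction on $\Lambda_b(\vartheta)$, and showing that $\Pi_j H_j\Pi_j$ is uniformly invertible on the tangent spaces near $\hat\lambda_b$, i.e.\ that the second-fundamental-form correction cannot cancel the negative-definite ambient Hessian. Pinning down the explicit geometry of $\Lambda_b(\vartheta)$ --- which Theorem~\ref{Theorem1} describes only implicitly through $\Delta$, $L_1$ and $L_2$ --- is where the technical material in the Appendix must do the work; once local smoothness (Lemma~\ref{Lemma2}) and nondegeneracy are in hand, quadratic convergence in either branch is a direct appeal to the standard Newton theory, respectively on $R^{k-1}$ and on the manifold \cite{Shub1986}.
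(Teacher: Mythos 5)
Your proposal is correct and follows essentially the same route as the paper: the same interior/boundary dichotomy, with classical Newton--Raphson under strict concavity of the local-mixture term giving quadratic convergence to an interior $\hat{\lambda}$, and Shub-style Newton on the smooth boundary manifold $\Lambda_b(\vartheta)$ (Lemma~\ref{Lemma2}) via the retraction of step (2c) giving local quadratic convergence to $\hat{\lambda}_b$. If anything you are more careful than the paper on two points it leaves implicit --- the second-order accuracy of the normal-line retraction and the invertibility of $\Pi_j H_j \Pi_j$ near $\hat{\lambda}_b$ --- and you correctly characterize $\hat{\lambda}_b$ by the first-order condition $P_t(g)=0$ rather than as the nearest point of $\Lambda(\vartheta)$ to the unconstrained maximizer, which is the cleaner formulation.
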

\begin{proof}
see Appendix.
\end{proof}

\section{Non-parametric Hazard}\label{Non-parametric Hazard}
Although our working example in Section 2 has a fixed hazard rate and exponential lifetime model, the methodology
can be generalized to other parametric marginal lifetime distributions with known hazard function up to a finite
dimensional parameter vector. Furthermore, identifiability property of local mixture models allows the methodology to
be generalized for nonparametric hazard function. When the baseline hazard function is an unknown time-dependent
function $\lambda_0(t)$, the hazard function for $i$th individual takes the form 
\begin{eqnarray}
\lambda_i(t)=\theta_i\, \lambda_0(t) \exp\{X_i \beta^{T}\}.\label{hazard_t}
\end{eqnarray}
The log likelihood function in Equation (\ref{log_like_exp}) has the following form
\begin{eqnarray}
l(\beta,Q)=\sum_{i=1}^{n}{\delta_i[\log\lambda_0(T_i)+ X_i \beta^T]\ + \sum_{i=1}^{n}\log\int{\theta^{\delta_{i}}\,\exp\{-\theta \Lambda_0(T_i) \,e^{X_i \beta^T}\}}\, dQ(\theta)}\nonumber
\end{eqnarray}
and after approximating the integral using a local mixture we obtain 
\begin{eqnarray}
l(\beta,\lambda)&=&\sum\nolimits_{i=1}^{n}{\left(\delta_i[\log\lambda_0(T_i)+ X_i \beta^T]+ \log f(T_i,\beta,\vartheta)\right)}\nonumber\\ 
 &&+  \sum\nolimits_{i=1}^{n}\log \left( 1+\sum\nolimits_{j=2}^{k}{\lambda_j\, A_{j}(\delta_i,y_i)} \right), \hspace{1cm} \lambda\in \Lambda(\vartheta)\label{log_like_exp1np}
\end{eqnarray}
where $y_i=\Lambda_0(T_i) \exp\{X_i \beta^T\}$. Therefore, the geometric and inferential properties of the model
stays the same and we can proceed as in previous section. 

To impute $\lambda_0(t)$ and $\Lambda_0(t)$ we can use the same argument in \cite{Gorfine2006}
to provide a recursive estimate of the cumulative hazard function using the fact that for two consecutive failure times
$T_{(i)}$ and  $T_{(i+1)}$ we have $\Lambda_0(T_{(i+1)})=\Lambda_0(T_{(i)})+\Delta \Lambda_i$. Substituting this recursive
equation in the log likelihood function in (\ref{log_like_exp1np}), considering the conventions in \cite{Breslow1972} and
taking partial derivative with respect to $\Delta \Lambda_i$ we obtain 
\begin{eqnarray}
\frac{\partial l}{\partial \Delta \Lambda_i}=\frac{1}{\Delta \Lambda_i}-\sum_{\ell=i}^{n}e^{X_\ell \beta}+\frac{P^{\prime}(e^{X_i\beta}[\Lambda_0(T_{(i)})+\Delta \Lambda_i])}{P(e^{X_i\beta}[\Lambda_0(T_{(i)})+\Delta \Lambda_i])}\label{partial_diff} 
\end{eqnarray}
which is a function of just $\Delta \Lambda_i$ when $\hat{\Lambda}_0(T_{(i)})$ is given at time $T_{(i+1)}$, where $P(\cdot)$
is a polynomial of degree four with its coefficients as linear functions of $(\lambda_2,\lambda_3,\lambda_4)$ and $P^{\prime}(\cdot)$
is its derivative with respect to $\Delta \Lambda_i$.
When denominator is not zero, equation (\ref{partial_diff}) is a polynomial of degree five which can be solved numerically
for $\Delta \Lambda_i$. Note that when there is no frailty factor; that is, $\lambda=(0,0,0)$ then the last term in equation
(\ref{partial_diff}) is zero, and the estimate of the cumulative hazard function reduces to the form in \cite{Johansen1983}
which is the estimate in \cite{Klein1992} with $\hat{\omega}=1$.


\section{Simulation Study}\label{Simulation}
In this section a simulation study is conducted to compare the local mixture method with the 
method in \cite{Klein1992}, which assumes a gamma model with mean 1 and variance $\eta$, for
the frailty and applies the  Expectation-Maximization algorithm.
Extensive simulation shows that Expectation-Maximization is quite powerful and consistent
as long as the assumptions are not violated. 
However, as shown
in the following, when frailty is generated form a left-skewed model with a small variation
then the local mixture method outperforms the method in \cite{Klein1992}. Also, the Expectation-Maximization method is a
repetitive optimization method while in local mixture method the optimization is performed
in just two steps; hence, it is faster. 

We let $C=0.01$, $\tau=4.6$ and follow a similar set-up as found in \cite{Hsu2004}. For each individual
the event time is $T=[ -\log(1-U) \{  \theta \exp\{\beta X\} \}^{-1}] ^{-1/\tau} C^{-1}$, where
$X\sim N(0,1)$, $U\sim$ uniform$[0,1]$. The censoring distribution is $N(100,15)$, and frailty
is assumed to follow a gamma distribution with mean 1 and variance $\eta$. As shown in Table \ref{simul1},
when $\eta$ is small LMM method works as good as EM method, while for larger $\eta$, EM returns smaller
bias, but LMM method always returns smaller estimation variance.

\begin{table}[h!]
\begin{center}
\caption{\footnotesize $\Gamma(\frac{1}{\eta},\eta)$ .}
\centering 
\begin{tabular}{c c c c c c c c}

\hline 
& & &\multicolumn{2}{c}{LMM } & \multicolumn{2}{c}{EM}\\ [0.3ex]
\hline
$n$ & $\eta$ & $\beta$ &  $bias$ & $std$ & $bias$ & $std$ & iterate \\ [0.3ex]
\hline 
200 & 0.1&  $\log{3}$ & -0.040  & 0.114 & 0.036 & 0.130 & No\\
200 & 0.2&  $\log{3}$ & -0.062  & 0.127 & 0.039 & 0.138 & No\\
200 & 0.4&  $\log{3}$ & -0.073  & 0.129 & 0.011 & 0.175 & No\\
\hline
\end{tabular}
\label{simul1}
\end{center}
\end{table}

Next, we suppose that frailty is misspecified, it
is assumed to follow (i) $Beta(5,1)$ with mean $0.833$ and standard deviation $.141$, (ii) mixture
$0.4 Beta(3,3)+0.6 Beta(3,1)$ with mean $0.65$ and standard deviation $0.225$. 
The bias and standard deviation of the estimates of $\beta$ obtained from $100$
repeated independent samples of size $n$ are reported in Table \ref{Simulation}.
%

\begin{table}[h!]
\begin{center}
\caption{\footnotesize Left skewed frailty; bias and standard deviation (std) of 100 estimates are reported for
both Local Mixture Method (LMM) and Expectation-Maximization (EM).}
\centering 
\begin{tabular}{c c c c c c c}

\hline 
& & & \multicolumn{2}{c}{LMM } & \multicolumn{2}{c}{EM}\\ [0.3ex]
\hline
$n$ & $\beta$ & $model$ &  $bias$ & $std$ & $bias$ & $std$\\ [0.3ex]
\hline 
500 &   log(3) &   (i)  &  -0.0093 &  0.069 &  0.038 &   0.082\\
500 &   log(3) &  (ii)  &  -0.0016 &  0.068 &  0.048 &   0.079\\
\hline
\end{tabular}
\label{simulation}
\end{center}
\end{table}

Table \ref{simulation} shows that for both cases the local mixture model returns both lower bias and lower
standard deviation, where the Expectation-Maximization method returns over estimation in both cases.

\section{Example}\label{Example_survival}
The data was reported based on a clinical trial for assessing the influence of rhDNase on the occurrence
of respiratory exacerbations among patients with cystic fibrosis (\citealp{Fuchs1994}). Among the
645 patients, 324 were assigned to a placebo group using a double-blind randomized design. For both treatment
and placebo group, we study the time to the first occurrence of respiratory exacerbation with two baseline
measurements of forced expository volume, {\em FEV}$_1$ and {\em FEV}$_2 $ as covariates. 
In Table \ref{Example2_2}, the coefficient estimates for the placebo group are reported, and Table \ref{Tab1}
presents the coefficients estimates for the treatment group. The difference between the estimates of the
two methods seems to be negligible for the treatment group, while it is quit noticeable for the placebo group

\begin{table}[h!]
\begin{center}
\caption{\footnotesize Coefficient estimates for placebo group
of rhDNAse data using both methods with unspecified hazard function are obtained.}
\begin{tabular}{c c c c}

\hline 
Method  & $\hat{\beta}_1$ &  $\hat{\beta}_2$\\ [0.3ex]
\hline
EM    &   0.113 & -0.065\\
LMM      &  0.082 & -0.104\\
\hline
\end{tabular}
\label{Example2_2}
\end{center}
\end{table}

\vspace{-.5cm}

\begin{table}[h!]
\begin{center}
\caption{\footnotesize Coefficient estimates for the treatment group of rhDNAse data.}
\begin{tabular}{c c c c}
\hline 
Method  & $\hat{\beta}_1$ &  $\hat{\beta}_2$\\ [0.3ex]
\hline 
EM    &   -0.039 & 0.061\\
LMM      &  -0.040 &  0.060\\
\hline
\end{tabular}
\label{Tab1}
\end{center}
\end{table}

To explore the structural difference between the two data sets we obtain the Poisson process corresponding
to event times for each group by binning the event times. Let 
$$N_j=\sum_{i=1}^n \delta_i I\{T_i\in \xi_j\}, \hspace{1cm} \xi_j=[t_j,t_{j+1})$$
where the length of the binning intervals are assumed to be fixed, $\gamma$ say. For $\gamma\in\{1,
2,\cdots,10\}$
we obtain $10$ different Poisson process for both treatment and placebo group, and compare the ratio of
variance to mean and skewness between the two groups.
\begin{figure}[!h]  
  \center
  \includegraphics[scale=.15]{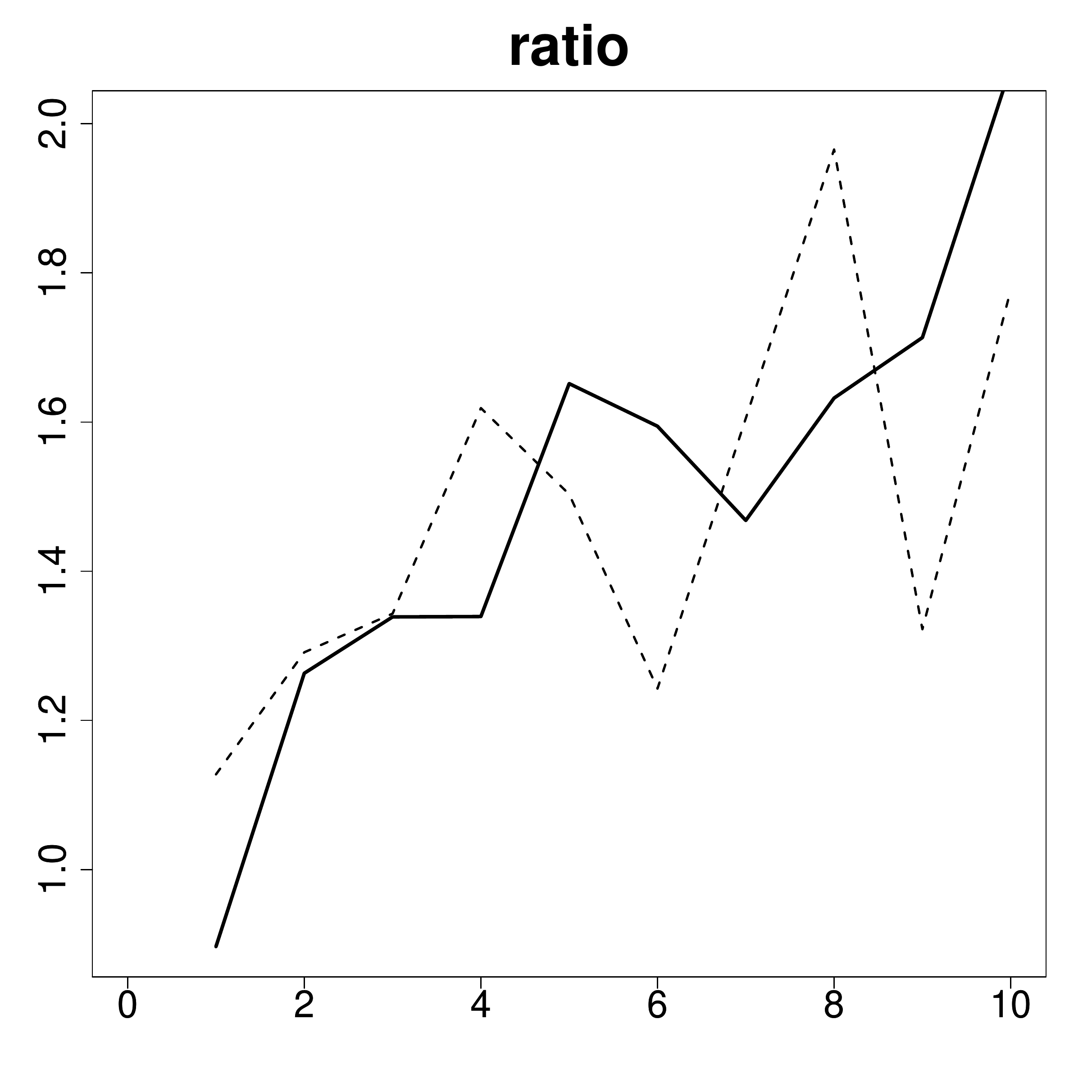} \quad\quad
  \includegraphics[scale=.15]{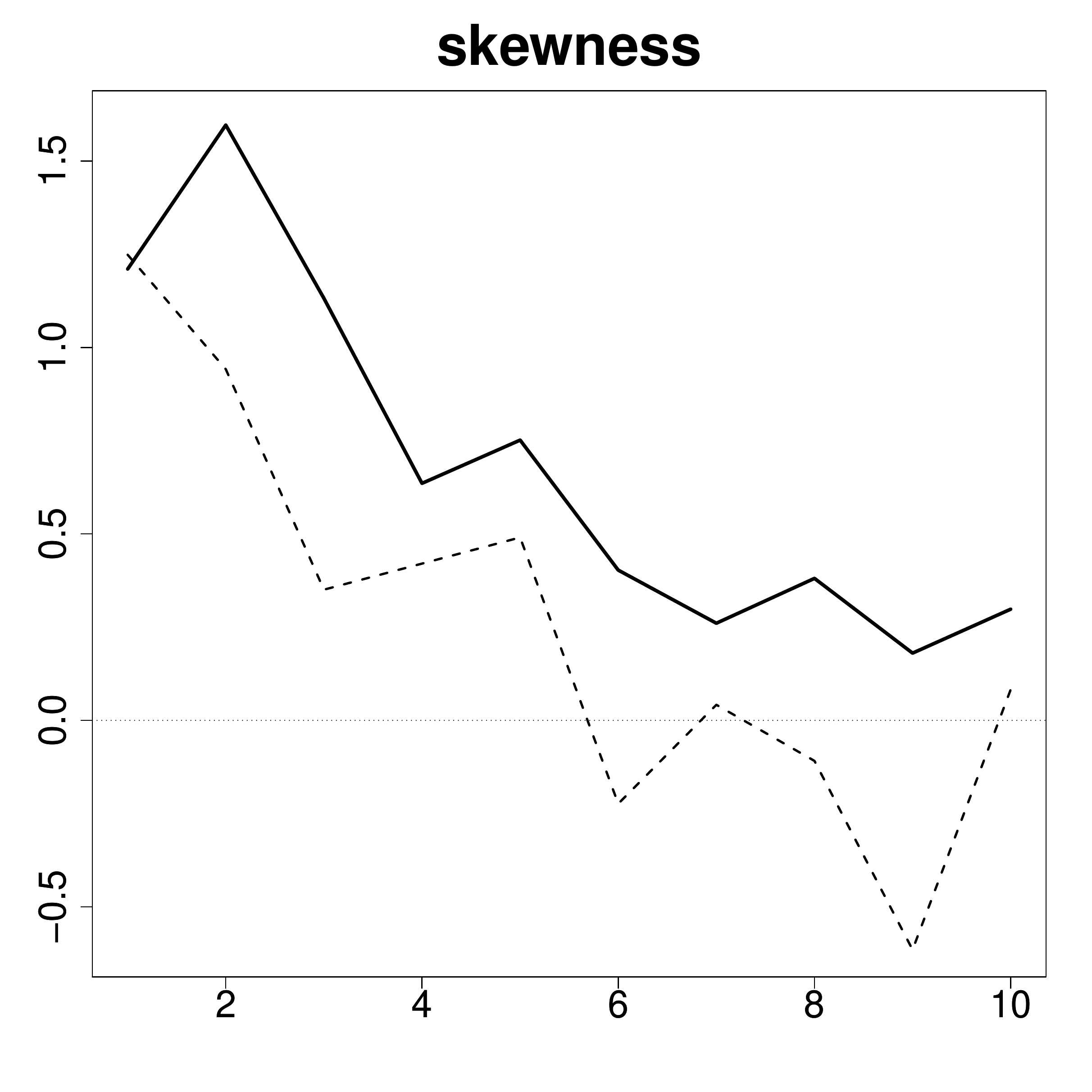}
  \caption{\footnotesize Left: Ratio of variance to mean for 10 different binning of both treatment, solid line,
  and placebo group, dashed line. Right: skewness for 10 different binning of both treatment, solid line,
  and placebo group, dashed line.}\label{meanvar}
\end{figure}

Clearly, for both groups and all the 10 bin-lengths we observe overdispersion, Figure \ref{meanvar}, which
is interpreted as existence of a random effect. However, the plot dose not show any meaningful
difference in the amount of overdispersion between the two groups. Nevertheless, there seems to be a difference
in the skewness structure between the two groups; the treatment group has positive skewness for all 10 bin lengths, while the
placebo group has negative skewness for $\gamma=6,7,8,9$. As illustrated in the simulation study in Section
\ref{Non-parametric Hazard}, when there 
is left-skewness in the random effect the local mixture method returns smaller bias and standard deviation since it is
flexible enough to adjust for it. Therefore, we expect discrepancy in coefficient estimates between the two methods for
placebo group, while they are almost similar for the treatment group.


\bibliographystyle{abbrvnat}
\bibliography{jabref.bib}

\begin{thebibliography}{31}
\providecommand{\natexlab}[1]{#1}
\providecommand{\url}[1]{\texttt{#1}}
\expandafter\ifx\csname urlstyle\endcsname\relax
  \providecommand{\doi}[1]{doi: #1}\else
  \providecommand{\doi}{doi: \begingroup \urlstyle{rm}\Url}\fi

\bibitem[Abbring and Van Den~Berg(2007)]{Abbring2007}
J.~H. Abbring and G.~J. Van Den~Berg.
\newblock The unobserved hetreogeneity distribution in duration analysis.
\newblock \emph{Biometrika}, 94\penalty0 (1):\penalty0 87--99, 2007.

\bibitem[Anaya-Izquierdo and Marriott(2007)]{Anaya-Izquierdo2007}
K.~Anaya-Izquierdo and P.~Marriott.
\newblock Local mixture models of exponential families.
\newblock \emph{Bernoulli}, 13:\penalty0 623--640, 2007.

\bibitem[Berger(1987)]{Berger1987}
M.~Berger.
\newblock \emph{Geometry I}.
\newblock Springer, 1987.

\bibitem[Breslow(1972)]{Breslow1972}
N.~Breslow.
\newblock Contribution to the discuassion on the paper of cox (1972).
\newblock \emph{Biom}, 34:\penalty0 216--217, 1972.

\bibitem[Clayton(1978)]{Clayton1978}
D.~G. Clayton.
\newblock A model for association in bivariate life tables and its application
  in epidemiological studies of familial tendency in chronic disease incidence.
\newblock \emph{Biometrika}, 65\penalty0 (1):\penalty0 141--151, 1978.

\bibitem[Clayton and Cuzick(1985)]{Clayton1985}
D.~G. Clayton and J.~Cuzick.
\newblock Multivariate generalizations of the proportional hazard model.
\newblock \emph{Journal of the Royal Statistical Society A}, 148\penalty0
  (2):\penalty0 82--117, 1985.

\bibitem[Cox(1972)]{Cox1972}
D.~R. Cox.
\newblock Regression models and life-tables.
\newblock \emph{Journal of the Royal Statistical Society B}, 34:\penalty0
  187--200, 1972.

\bibitem[Eleber and Ridder(1982)]{Eleber1982}
C.~Eleber and G.~Ridder.
\newblock True and spurious duration dependence: The identifiability of the
  proportional hazard model.
\newblock \emph{Review of Economic Studies}, 49:\penalty0 403--409, 1982.

\bibitem[Fuchs et~al.(1994)Fuchs, S., Christiansen, Morris, Nash, Ramsey,
  Rosenstein, Smith, and Wohl]{Fuchs1994}
H.~J. Fuchs, B.~D. S., D.~H. Christiansen, E.~M. Morris, M.~L. Nash, B.~W.
  Ramsey, B.~J. Rosenstein, A.~L. Smith, and M.~E. Wohl.
\newblock Effect of aerosolized recombinant human dnase on exacerbations of
  respiratory symptoms on pulmonary function in paitients with cystic fibrosis.
\newblock \emph{New Eng. J. Med}, 331:\penalty0 637--643, 1994.

\bibitem[Gorfine et~al.(2006)Gorfine, Zucher, and Hsu]{Gorfine2006}
M.~Gorfine, D.~M. Zucher, and L.~Hsu.
\newblock Prospective survival analysis with general semiparametric shared
  frialty model: A pseudo full likelihood approach.
\newblock \emph{Biometrika Trust}, 93\penalty0 (3):\penalty0 735--741, 2006.

\bibitem[Horowitz(1999)]{Horowitz1999}
J.~L. Horowitz.
\newblock Semiparametric estimation of a proportional hazard model with
  unobserved hetrogeneity.
\newblock \emph{Econometrica}, 67\penalty0 (5):\penalty0 1001--1028, 1999.

\bibitem[Horowitz(2010)]{Horowitz2010}
J.~L. Horowitz.
\newblock \emph{Semiparametric and Nonparametric Methods in Econometrics}.
\newblock Springer, 2010.

\bibitem[Hougaard(1984)]{Hougaard1984}
P.~Hougaard.
\newblock Life table methods for heterogeneous populations: Distributions
  describing the heterogeneity.
\newblock \emph{Biometrika}, 71\penalty0 (1):\penalty0 75--83, 1984.

\bibitem[Hougaard(1986)]{Hougaard1986a}
P.~Hougaard.
\newblock Survival models for heterogeneus population derived from stable
  distributions.
\newblock \emph{Biometrika}, 73\penalty0 (2):\penalty0 387--396, 1986.

\bibitem[Hsu et~al.(2004)Hsu, Chen, Gorfine, and Malone]{Hsu2004}
L.~Hsu, L.~Chen, M.~Gorfine, and K.~Malone.
\newblock Semiparametric estimation of marginal hazard function from
  case-control family studies.
\newblock \emph{Biometrics}, 60:\penalty0 936--944, 2004.

\bibitem[Hu et~al.(2011)Hu, Nan, Lin, and Robbins]{Hu2011}
T.~Hu, B.~Nan, X.~Lin, and J.~M. Robbins.
\newblock Time-dependent cross ratio estimation for bivariate failure times.
\newblock \emph{Biometrika}, 98\penalty0 (2):\penalty0 341--354, 2011.

\bibitem[Johansen(1983)]{Johansen1983}
S.~Johansen.
\newblock An extenssion of cox's regression model.
\newblock \emph{International Statistical Review}, 51:\penalty0 258--262, 1983.

\bibitem[Jorgensen(1997)]{Jorgensen1997}
B.~Jorgensen.
\newblock \emph{The Theory of Dispersion Models}.
\newblock London: Chapman and Hull, 1997.

\bibitem[Klein(1992)]{Klein1992}
J.~P. Klein.
\newblock Semiparametric estimation of random effects using cox model based on
  the em algorithm.
\newblock \emph{Biometrics}, 48:\penalty0 795--806, 1992.

\bibitem[Lawless(1981)]{Lawless1981}
J.~F. Lawless.
\newblock \emph{Statistical Models and Methods for Lifetime Data}.
\newblock Wiley, 1981.

\bibitem[Marriott(2002)]{Marriott2002}
P.~Marriott.
\newblock On the local geometry of mixture models.
\newblock \emph{Biometrika}, 89:\penalty0 77--93, 2002.

\bibitem[Martinussen et~al.(2011)Martinussen, Scheike, and
  Zucker]{Martinussen2011}
T.~Martinussen, T.~H. Scheike, and D.~M. Zucker.
\newblock The aalen additive gamma frailty hazard model.
\newblock \emph{Biometrika}, 98\penalty0 (4):\penalty0 831--843, 2011.

\bibitem[Nan et~al.(2006)Nan, Lin, Lisabeth, and Harlow]{Nan2006}
B.~Nan, X.~Lin, L.~D. Lisabeth, and S.~Harlow.
\newblock Piecewise constant cross ratio estimation for association of age at a
  marker event age at a menopause.
\newblock \emph{Journal of the American Statistical Association}, 101:\penalty0
  65--77, 2006.

\bibitem[Nielsen et~al.(1992)Nielsen, Gill, Anderson, and
  Sorensen]{Nielsen1992}
G.~G. Nielsen, R.~D. Gill, P.~K. Anderson, and T.~I.~A. Sorensen.
\newblock A counting process approach to maximum likelihood estimation in
  frailty models.
\newblock \emph{Scandinavian Journal of Statistics}, 19\penalty0 (1):\penalty0
  25--43, 1992.

\bibitem[Oakes(1982)]{Oakes1982}
D.~Oakes.
\newblock A model for association in bivariate survival data.
\newblock \emph{Journal of the Royal Statistical Society B}, 44\penalty0
  (3):\penalty0 414--422, 1982.

\bibitem[Rudin(1976)]{Rudin1976}
W.~Rudin.
\newblock \emph{Principles of Mathematical Analysis}.
\newblock New York: McGraw-Hill, 3rd edition, 1976.

\bibitem[Shub(1986)]{Shub1986}
M.~Shub.
\newblock Some remarks on dynamical systems and numerical analysis.
\newblock \emph{Proc. VII ELAM (L. Lara-Carrero and J. Lewowicz, eds.),
  Equinoccio, U. Simon Bolivar, Caracas, 69-92}, pages 69--92, 1986.

\bibitem[Small(2010)]{Small2010}
C.~G. Small.
\newblock \emph{Expansions and Asymptotics for Statistics}.
\newblock Chapman and Hall, 2010.

\bibitem[Ulrich and Watson(1994)]{Ulrich1994}
G.~Ulrich and L.~T. Watson.
\newblock Positivity conditions for quartic polynomials.
\newblock \emph{AIAM J. Sci. Comput.}, 15:\penalty0 528--544, 1994.

\bibitem[Vaupel et~al.(1979)Vaupel, Manton, and Stallard]{Vaupel1979}
J.~W. Vaupel, K.~G. Manton, and E.~Stallard.
\newblock The impact of heterogeneity in individual frailty on the dynamics of
  mortality.
\newblock \emph{Demography}, 16:\penalty0 539--454, 1979.

\bibitem[Zeng et~al.(2009)Zeng, Chen, and Ibrahim]{Zeng2009}
D.~Zeng, Q.~Chen, and J.~Ibrahim.
\newblock Gamma frailty transformation models for multivariate survival times.
\newblock \emph{Biometrika}, 96\penalty0 (2):\penalty0 277--291, 2009.

\end{thebibliography}

\section{Appendix}

\begin{proof}(Lemma \ref{Lemma2})
For $k=4$ (without lose of generality), $\Lambda_b(\vartheta)$ can be parametrized using the solution 
set of equations $q(y)=0$, $q^{\prime}(y)=0$, as functions of $\lambda_2$, $\lambda_3$ and $\lambda_4$
for all $y>0$, which retain the locus of the intersecting line for any two consecutive supporting planes
(Do~Carmo 1976 ch.2). Direct calculation shows that $\Lambda_b(\vartheta)$ can be obtained
by the following smooth mapping,  
$$\mathcal{C}:(0,\infty)\times U\rightarrow R^3, \hspace{1cm} (y,\lambda_2)=\left[\lambda_2,\lambda_3(y,\lambda_2),\lambda_4(y,\lambda_2)\right]$$
where, $U\subset R$ is an open interval and 
$$\lambda_3(y,\lambda_2) = \frac{2(y^3-5y^2+8y)\lambda_2+4y-12}{y^2(y^2-6y+12)} ,\,\,\,\,\,\,\,\,\lambda_4(y,\lambda_2) =\frac{(y^3-4y^2+6y)\lambda_2+3y-6}{y^3(y^2-6y+12)}$$  
Therefore, the implicit function theorem (\citealp[p.224]{Rudin1976}) implies that $\Lambda_b(\vartheta)$ is a smooth manifold.
\end{proof}

In general, the boundary of parameter space of local mixture models may not be smooth manifolds;
hence, in those situations either the possible singularity points must be characterized or other
optimization approaches must be applied for finding maximum on boundary.

\subsection*{Algorithm Description}
To clarify the technical background and convergence proof of the algorithm, the following paragraphs
are in order. For convenience we present the local mixture term in (\ref{log_like_exp1}) by 
$l_{\vartheta}(\lambda)$.\\
In step (2a), $t_j$ is tangent to $\Lambda_b(\vartheta)$ at $\lambda^{(j-1)}=\lambda^{\star}$
and can be obtained as follows. If we collect the quartic $q(y)$ in (\ref{supp_poly}) with
respect to $\lambda_2^{\star}$, $\lambda_3^{\star}$ and  $\lambda_4^{\star}$, we obtain the
supporting plane with the normal vector $( y^{\star2}-2y^{\star},\, -y^{\star3}+3y^{\star2},\, y^{\star4}-4y^{\star3})$,
where $y^{\star}$ is the real multiple root of $q(y)$.

In step (2b), $\Pi_{j}=I-N_j N_j^T$ presents the matrix of orthogonal projection onto tangent
plane $t_j$, with respect to Euclidean inner product, in which $I$ is the identity matrix.
Therefore, for $g_j$ and $H_j$ the gradient vector and hessian matrix of $l_{\vartheta}(\lambda)$
at $\lambda^{(j-1)}$, the first and second covariant derivatives are $\Pi_j g_j$ and $\Pi_j H_j$,
respectively. 

Step (2c), describes the so called exponential -also called retraction- mapping
$R:T\Lambda_b(\vartheta)\rightarrow \Lambda_b(\vartheta)$, where $T\Lambda_b(\vartheta)$ represents
the tangent bundle of $\Lambda_b(\vartheta)$, the disjoint union of $t_j$'s (See Shub 1986).
Let $R_j$ be the restriction of $R$ to $t_j$, then $R_j$ is a one-to-one mapping that maps the vector
$(\Pi_j H_j^{-1})(\Pi_j g_{j})\in t_j$ to a curve between $\lambda^{(j-1)}$ and
$R_j(\lambda^{(j)})$ on $\Lambda_b(\vartheta)$ and holds the following assumptions, 
\begin{itemize}
 \item[1.] $R_j$ is defined in an open interval $U_{r_j}(0_j)\in t_j$, about $0_j$ of  radius $r_j>0$,
 where $0_j$ is the representation of $\lambda^{(j-1)}$ in  $t_j$.
 \item[2.] $R_j(\dot{\lambda})=\lambda$ if and only if $\dot{\lambda}=0_j$.
 \item[3.] $R$ is smooth and $DR_j(0_j)=id_{t_j}$, since $R_j(\lambda^{(j-1)})=\lambda^{(j-1)}$.
\end{itemize}

\begin{figure}[!h] 
  \center
\input{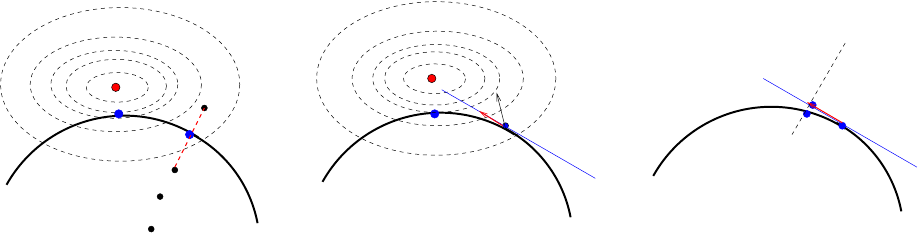_t}
\caption{\footnotesize Schematic visualization of the algorithm steps}\label{algorithm}
\end{figure}

\begin{proof}(Theorem \ref{Theorem2})
Consider the following two cases,\\
(I) $\hat{\lambda} \in \Lambda(\vartheta)$\\
Since $l_{\vartheta}(\lambda)$, for any fixed $\vartheta$, is strictly concave and satisfy the second-order
sufficient conditions, then step (1) of the algorithm converges to the unique global maximum $\hat{\lambda}$
in quadratic order, for any initial point $\lambda^{(0)}$ inside the interior of $\Lambda(\vartheta)$
(see Nocedal \& Wright 2006 p.45).\\
(II) $\hat{\lambda} \notin \Lambda(\vartheta)$\\
Since $\Lambda(\vartheta)$ is closed
and convex in a finite dimensional vector space, there is a unique $\hat{\lambda}_b\in \Lambda(\vartheta)$
with minimum distance from $\hat{\lambda}$, and consequently $l_{\vartheta}(\hat{\lambda}_b)\geq l_{\vartheta}(\lambda)$
for all $\lambda\in \Lambda(\vartheta)$, since $l_{\vartheta}(\lambda)$ is strictly concave. Moreover,
the vector $\hat{\lambda}_b\hat{\lambda}$ is orthogonal to the supporting plane $t_b$, tangent to
$\Lambda(\vartheta)$ at $\hat{\lambda}_b$; hence, $(\Pi_b g_b)$ is a zero vector in the tangent vector
space $t_b$.\\
In addition, according to Lemma 2, $\Lambda_{b}(\vartheta)$, is a locally smooth manifold
embedded in $R^{k-1}$. According to notations in Shub (1986), step 2 can be presented by the following
mapping
\begin{eqnarray}
\mathcal{S}&:&\Lambda_b(\vartheta)\rightarrow \Lambda_b(\vartheta)\nonumber\\
\lambda^{(j-1)}&\rightarrow& R_j\left(\lambda^{(j-1)}\,,\,(\Pi_j H_j^{-1}) (\Pi_jg_{j}) \right) 
\end{eqnarray}
where, by condition (3), $\mathcal{S}$ is smooth. Also, if $(\Pi_j H_j^{-1})$ exists then by conditions (1) and
(2), the fixed points of $\mathcal{S}$ (i.e, $\mathcal{S}(\lambda)=\lambda$) are the zero's of the covariant
gradient, and at fixed points the derivative of $\mathcal{S}$ vanishes.
\end{proof}

\end{document}